\newtheorem{thm}{Theorem}[section]
\newtheorem{prop}[thm]{Proposition}
\theoremstyle{definition}
\newtheorem{defn}[thm]{Definition}
\theoremstyle{remark}
\theoremstyle{example}
\newtheorem{xmpl}[thm]{Example}
\theoremstyle{assumption}
\newtheorem{asmp}[thm]{Assumption}
\newcommand{\R}{{\mathbb{R}}}
\begin{document}
\title{On Bankruptcy Game Theoretic Interval Rules}
\author{ Rodica Branzei \\
  University ``Alexandru Ioan Cuza'', Ia\c{s}i, Romania
  \\
  \texttt{branzeir@info.uaic.ro} \and Marco Dall'Aglio \\
Luiss University, Rome, Italy \\
 \texttt{mdallaglio@luiss.it} \and
Stef Tijs \\ Tilburg University, The
  Netherlands \\ \texttt{S.H.Tijs@uvt.nl}}

\date{November 22, 2012}

\maketitle

\begin{abstract}
Interval bankruptcy problems arise in situations where an estate
has to be liquidated among a fixed number of creditors and
uncertainty about the amounts of the claims is
modeled by intervals. We extend in the interval setting the
classical results by Curiel, Maschler and Tijs (1987)
that characterize division rules which correspond to solutions of
the cooperative bankruptcy game. Finally, we analyze the difficulties with incorporating the uncertainty about the estate.

{\em Keywords: }cooperative games; interval data; bankruptcy
problems.

{\em JEL Classification:} C71.
\end{abstract}

\section{Introduction}

Bankruptcy problems provide a simple and effective mathematical
model to describe situations where an estate has to be divided
among a fixed number of individuals (creditors or players) who
advance claims with total value too large to be fully compensated
by the value of the estate. The foundations for these models are
set in the works of O'Neill \cite{o82} and Aumann and Maschler
\cite{am85}. These authors analyze the seemingly mysterious
solutions for specific instances of a bankruptcy problem
prescribed in the Babylonian Talmud and find that the answers
given by the ancient book are in fact solutions of a cooperative
game, called the bankruptcy game, played by the creditors. Curiel,
Maschler and Tijs \cite{cmt87} consolidate the links between
bankruptcy problems and cooperative game theory by studying the
whole class of division rules for bankruptcy problems which
correspond to solutions of the corresponding bankruptcy game. They
provide a characterization of such rules  by means of a truncation
property: the solution based on the bankruptcy game are those, and
only those, that ignore claims which are higher than the whole
estate, and reduce them to the value of the estate. The same work
also characterizes division rules that provide allocations
belonging to the core of the bankruptcy game. A review of the literature on bankrupt\-cy problems is given in Moulin \cite{m02} and Thomson \cite{t03}.

The bankruptcy problem studied in those pioneering works requires
an exact knowledge of all the terms of a bankruptcy problem. We
allow instead for a certain degree of uncertainty on the problem
data. In fact, claimants often face uncertainty regarding
their effective rights and, as a result, individual claims can be expressed in
the form of closed intervals without any probability distributions attached
to them. In such situations our model based on interval claims fits better
than the more standard claims approach and, additionally, offers flexibility in
conflict resolution under interval uncertainty of the estate at stake. Economic
applications of our approach include funds’ allocation of a firm among its
divisions (Pulido, S\`anchez-Soriano and Llorca \cite{psl02}, Pulido et al. \cite{pbhls08}),
priority problems Moulin \cite{m00}, distribution of penalty costs in delayed
projects (Branzei et al.\ \cite{bfft02}) and disputes related to cooperation in joint
projects where agents have restricted willingness to pay (Tijs and Branzei
\cite{tb04}).

Our aim is to extend the general result by Curiel, Maschler and
Tijs regarding bankruptcy problems with classical (or exact) data
to the interval setting. Can we characterize interval division
rules which correspond to solutions for interval bankruptcy games?
Special care is placed on the definitions of the entities and of
the operations in the new environment. In particular, we 
verify that the characterization Theorem of
Curiel, Maschler and Tijs (Theorem 5 in \cite{cmt87}) can be extended to the case of interval claims. However, we show through counterexamples that a similar extension to the case 
where the  interval uncertainty affects both the claims and the estate is not so straightforward.

The remainder of the paper is organized as follows: Section 2
reviews the definitions and the results of interest in the
classical setting; Section 3 introduces interval bankruptcy
problems and their truncations; Section 4 provides the extension of 
the characterization Theorem with interval uncertainty regarding the claims and uncertainty-free estate.
In the last section we show the difficulties of including interval uncertainty for the estate.

\section{The classical setting} All the results in this section are
taken from Curiel, Maschler and
Tijs \cite{cmt87}. We consider situations where a monetary
{\em estate} $E \in \mathbb{R}_+$ has to be divided among a set of
{\em claimants} (players) $N = \{1,2,\ldots,n\}$, each having a
{\em claim} $d_i$, $i \in N$, over the estate. Let
$d=(d_1,d_2,\ldots,d_n) \in \mathbb{R}^n_+$ denote the $n$-tuple
of claims. It is assumed that $E \leq \sum_{i \in N} d_i$. A {\em division rule} $f$
associates with each bankruptcy problem $(E,d)$ an $n$-tuple
$f(E,d) = (f_1(E,d),f_2(E,d),\ldots,f_n(E,d))$ such that
\[
d_i \geq f_i(E,d) \geq 0 \mbox{ for every }i \in N \quad
\mbox{and} \quad \sum_{i \in N}f_i(E,d) = E.
\]
A cooperative game, called the {\em bankruptcy game}, is defined
by 
$$
v_{E,d}(S)=(E - \sum_{i \in N \setminus S} d_i)_+ \: ,
$$
where $(x)_+= \max \{0,x\}.$ $v_{E,d}(S)$
denotes the minimal amount that the coalition $S \subset N$ will
receive, once the claims of the creditors outside $S$ have been
fully compensated. Individual claims larger than the whole estate
may be considered unreasonable. If this is the case, we consider
the {\em truncated bankruptcy problem} $(E,d \wedge E)$,
where $d \wedge E = (d_1 \wedge E,d_2 \wedge E,\ldots,d_n \wedge
E)$ and
$a \wedge b = \min \{a,b\}$.

\begin{defn}
A division rule $f$ for a bankruptcy problem is a game theoretic
division rule if there is a solution concept $g$\footnote{In \cite{cmt87} the notion of solution concept is not defined explicitly. We will simply assume that $g$ is any mapping from $\R_+^{2^n - 1}$ to $\R_+^n$. Any game $v$ among $n$ players is fully defined by $2^n -1$ values since $v(\emptyset)=0$ always holds.} for cooperative
games such that
\[
f(E,d) = g(v_{E,d}) \mbox{ for every bankruptcy problem} (E,d) \in BR^N.
\]
\end{defn}
We will focus on extension of the following Theorem.
\begin{thm}\label{thm5}(\cite{cmt87}, Theorem 5) A division rule
$f$ for bankruptcy problems is a game theoretic division rule if
and only if $f(E,d)=f(E,d \wedge E)$.
\end{thm}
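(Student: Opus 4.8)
The plan is to reduce both implications to one structural fact: the bankruptcy game $v_{E,d}$ depends on the claims vector $d$ only through its truncation $d \wedge E$, and moreover the pair $(E,d\wedge E)$ can be read off from $v_{E,d}$. First I would prove the key lemma that $v_{E,d}(S)=v_{E,d\wedge E}(S)$ for every coalition $S \subseteq N$. Writing $T = N \setminus S$, this amounts to $(E - \sum_{i\in T} d_i)_+ = (E - \sum_{i\in T}(d_i \wedge E))_+$. The case analysis is short: if $\sum_{i\in T} d_i < E$ then each $d_i$ with $i\in T$ lies below $E$, so $d_i \wedge E = d_i$ and the two expressions coincide; otherwise $\sum_{i\in T} d_i \geq E$ makes the left side vanish, while $\sum_{i\in T}(d_i\wedge E)\geq E$ still holds (either no claim is truncated, or some $d_j\geq E$ already contributes a full $E$), so the right side vanishes too. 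In the same breath I would record the recovery formulas $E = v_{E,d}(N)$ and $d_j \wedge E = v_{E,d}(N) - v_{E,d}(N\setminus\{j\})$, which express the estate and every truncated claim as explicit functions of the game. I would also note that $(E,d\wedge E)$ is itself a legitimate bankruptcy problem, since $E \leq \sum_i d_i$ forces $E \leq \sum_i (d_i\wedge E)$, so $f$ is defined on it.

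For the forward implication, suppose $f(E,d)=g(v_{E,d})$ for some solution concept $g$. Applying $g$ to the identity of games supplied by the lemma gives $f(E,d)=g(v_{E,d})=g(v_{E,d\wedge E})=f(E,d\wedge E)$, where the last equality holds because $(E,d\wedge E)$ is a bankruptcy problem on which $g$ represents $f$.

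For the converse, assume $f(E,d)=f(E,d\wedge E)$ for every bankruptcy problem and construct $g$ explicitly: set $g(v)=f(E,d\wedge E)$ whenever $v=v_{E,d}$ arises from some bankruptcy problem, and assign an arbitrary value in $\R_+^n$ (say the zero vector) to all other games. The one thing to verify is that this is well defined, i.e.\ that $v_{E,d}=v_{E',d'}$ forces $f(E,d\wedge E)=f(E',d'\wedge E')$. The recovery formulas settle this at once: $v_{E,d}(N)=v_{E',d'}(N)$ gives $E=E'$, and the coalitions $N\setminus\{j\}$ give $d_j\wedge E = d'_j\wedge E'$ for each $j$, so the two right-hand sides carry identical arguments. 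With $g$ well defined, the truncation hypothesis yields $g(v_{E,d})=f(E,d\wedge E)=f(E,d)$ for every bankruptcy problem, exhibiting $f$ as a game theoretic division rule.

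The main obstacle is precisely the well-definedness of $g$ in the converse: the map $(E,d)\mapsto v_{E,d}$ is far from injective, since any two claim vectors with the same truncation yield the same game, so one cannot define $g$ on a game by naming a preimage without first checking that all preimages carry the same $f$-value. Everything therefore hinges on the recovery formulas, and in particular on the fact that the marginal $v_{E,d}(N)-v_{E,d}(N\setminus\{j\})$ recovers the \emph{truncated} claim $d_j\wedge E$ rather than $d_j$ itself; this is where the genuine content of the theorem sits, and once it is established both directions follow immediately.
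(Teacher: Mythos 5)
Your proof is correct, and it follows essentially the blueprint the paper itself uses. Note that the paper states this theorem only as a citation to Curiel, Maschler and Tijs and gives no proof of it directly; the closest internal argument is the proof of the interval analogue, Theorem \ref{int-thm5}, together with Proposition \ref{uegames_coincide}, and your two ingredients are exactly theirs: truncation-invariance of the game, $v_{E,d}=v_{E,d\wedge E}$ (your case analysis matches the paper's proof of \eqref{eq_tue}), and recovery of the estate and the \emph{truncated} claims from the game via $E=v_{E,d}(N)$ and $d_j\wedge E=v_{E,d}(N)-v_{E,d}(N\setminus\{j\})$, which is precisely the computation $\overline{M}_i^w=E-(E-\overline{d}_i)_+=\overline{d}_i\wedge E$ in the paper. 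The one real difference is how the solution concept is packaged in the converse: you define $g$ only on the image of $(E,d)\mapsto v_{E,d}$ and extend arbitrarily, which obliges you to verify well-definedness --- a check you carry out correctly, and which is legitimate since the paper's footnote allows $g$ to be any map from $\R_+^{2^n-1}$ to $\R_+^n$. The paper instead defines its solution concept by a closed formula on \emph{all} games, adding the correction term $\bigl(\overline{w}(N)-\sum_{i\in N}\overline{M}_i^w\bigr)_+/n$ to the marginals so that the arguments fed to $f$ always form an admissible bankruptcy problem; this makes well-definedness automatic at the cost of checking that the correction vanishes on bankruptcy games. The two routes are interchangeable here: yours makes the injectivity-up-to-truncation phenomenon explicit (which you rightly identify as the real content of the theorem), while the paper's total formula transfers more smoothly to the interval setting, where parametrizing the image of the game map would be more awkward. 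Your side remarks --- that $(E,d\wedge E)$ is itself a bankruptcy problem because $E\leq\sum_i d_i$ forces $E\leq\sum_i(d_i\wedge E)$, needed for both directions --- are accurate and close the only gaps a careless version of this argument would leave.
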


\section{The interval setting} Here we extend our analysis to the
situation where the claim of each player is expressed
as an interval (instead of a single value). To cope with interval uncertainty, we denote by $I(\R_+)$ the set of all closed and bounded intervals in $\R_+$, and by $I(\R_+)^N$ the set of all  $n-$dimensional vectors whose elements belong to $I(\R_+)$. Let $I,J \in I(\R_+)$ with $I=[\underline{I},\overline{I}], J=[\underline{J},\overline{J}],$ then $I+J=[\underline{I} + \underline{J},\overline{I} + \overline{J}]$ and $I - J = [\underline{I} - \overline{J},\overline{I} - \underline{J}].$ We say that $I$ is {\em weakly better} than $J$, which we denote by $I \succeq J,$ or, equivalently by $J \preceq I,$ if $\underline{I} \geq \underline{J}$ and $\overline{I} \geq \overline{J}.$ For further reference on interval mathematics we refer to Moore \cite{m95}.

Here we deal
with a ``crisp''  estate $E$ and
an $n$-tuple  of interval claims
$[d]=([\underline{d}_1,\overline{d}_1],[\underline{d}_2,\overline{d}_2]
,\ldots,[\underline{d}_n,\overline{d}_n])$. We will assume that
\begin{equation} \label{basicassmp}E \leq \sum_{i \in N}
\underline{d}_i .
\end{equation}
meaning that in no case the estate will be sufficiently rich to satisfy the claimants' requests in full.

\begin{defn}
An interval bankruptcy rule determines, for each estate
$E$ and each set of interval claims $[d]$ an $n$-tuple of
interval rewards
\[
\mathcal{F}(E,[d])=
(\mathcal{F}_1(E,[d]),\mathcal{F}_2(E,[d]),\ldots,\mathcal{F}_n(E,[d]))
\in I(\R)^N
\]
which are {\em reasonable,} i.e. $[0,0] \preceq \mathcal{F}_i(E,[d])
 \preceq [\underline{d}_i,\overline{d}_i]$ for each $i \in N$, and 
{\em weakly efficient\footnote{As opposed to {\em (strong) efficiency}, which requires $\sum_{i \in N} \mathcal{F}_i(E,[d]) = [E,E].$},} i.e. $\sum_{i \in N} \mathcal{F}_i(E,[d]) \ni
E$.
\end{defn}

We will consider reasonable monotonicity
assumptions for our bankrupt\-cy rules. Denote with $d_{-i}$ the set of all claims but the
claim of the $i$-th player, i.e.\ 
$$
d_{-i}=
\{d_1,\ldots,d_{i-1},d_{i+1},\ldots,d_n\}.
$$
 Our rule $f$ will
satisfy the following
\begin{asmp} \label{asmpf}
We will consider bankruptcy rules satisfying monotonicity in the resources and in both self-regarding and other-regarding claims. More in detail, for every $i \in N$, the component $f_i$ of the classical
bankruptcy rule $f$ is nondecreasing in $E$ and $d_i$, while
it is nonincreasing in each $d_j$, $j \in N \setminus \{i\}$.
\end{asmp}

It can be shown that the most important bankruptcy
rules\footnote{these include: the proportional rule, defined by
$PROP_{i}(E,d)=\frac{d_{i}}{D}E$ for all $i\in N$, with $D =
\sum_{i \in N} d_i$; the constrained equal awards rule, defined by
$CEA_{i}(E,d)=\min\left\{d_{i},\alpha\right\}$, where $\alpha$ is
determined by $\sum_{i\in N} CEA_{i}(E,d)=E$; the constrained
equal losses rule, defined by
$CEL_{i}(E,d)=\max\left\{d_{i}-\beta,0\right\}$, where $\beta$ is
determined by $\sum_{i\in N} CEL_{i}(E,d)=E$; the Talmudic rule
$TAL(E,d)$, given by $CEA(E,\frac{d}{2})$ if $E \leq \frac{D}{2}$
and by $\frac{d}{2} + CEL(E - \frac{D}{2},\frac{d}{2})$
otherwise.} verify Assumption \ref{asmpf} (this is illustrated in
Branzei et al.\ \cite{bdt08}, Appendix A)

\begin{defn} Given a bankruptcy rule $f$, we define the interval bankruptcy rule based on $f$
as
\[ \mathcal{F}(f;E,[d]) =
(\mathcal{F}_i(f;E,[d]))_{i \in N}
\]
where
\begin{equation}
\label{regf} \mathcal{F}_i(f;E,[d]) =
[f_i(E,\underline{d}_i,\overline{d}_{-i}),
f_i(E,\overline{d}_i,\underline{d}_{-i})] \qquad
\mbox{for every }i \in N \; .
\end{equation}
\end{defn}
We say that an interval is {\em tight with respect to a given property} if
each proper subset of that interval {\em does not} satisfy the
same property.
\begin{prop}
\label{prop_tightwe}
Suppose $f$ satisfies Assumption \ref{asmpf}. Then
\begin{enumerate}
\item for each $i \in N$ and $d \in [d]$, we have
\[
f_i(E,d) \in \mathcal{F}_i(f;E,[d])
\]
and, for all $i \in N$, the interval $\mathcal{F}_i(f;E,[d])$ is
tight with respect to selection inclusion.
\item $\mathcal{F}(f,\cdot,\cdot)$ is weakly efficient and reasonable.
\end{enumerate}
\end{prop}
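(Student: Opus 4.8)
The plan is to read everything off the monotonicity hypotheses of Assumption~\ref{asmpf} together with the two defining properties of a classical division rule, namely $0\le f_i(E,d)\le d_i$ and $\sum_{i\in N}f_i(E,d)=E$. Before anything else I would record that every evaluation of $f$ occurring below is taken at a genuine bankruptcy problem: for any selection $d\in[d]$ one has $\sum_{i\in N}d_i\ge\sum_{i\in N}\underline{d}_i\ge E$ by \eqref{basicassmp}, and the two extreme profiles $(\underline{d}_i,\overline{d}_{-i})$ and $(\overline{d}_i,\underline{d}_{-i})$ satisfy the same inequality for the same reason, so $f$ is well defined at each of them.

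For the first inclusion in part~(i), I would fix $i$ and a selection $d\in[d]$, so that $\underline{d}_i\le d_i\le\overline{d}_i$ and $\underline{d}_j\le d_j\le\overline{d}_j$ for $j\ne i$. Passing from the profile $(\underline{d}_i,\overline{d}_{-i})$ to $(d_i,d_{-i})$ raises the $i$-th claim and lowers every other claim; since $f_i$ is nondecreasing in $d_i$ and nonincreasing in each $d_j$, $j\ne i$, this can only increase $f_i$, giving $f_i(E,\underline{d}_i,\overline{d}_{-i})\le f_i(E,d)$. The symmetric comparison with $(\overline{d}_i,\underline{d}_{-i})$ yields $f_i(E,d)\le f_i(E,\overline{d}_i,\underline{d}_{-i})$, and together these say exactly $f_i(E,d)\in\mathcal{F}_i(f;E,[d])$. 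Tightness is then a matter of attainment: the two endpoints of $\mathcal{F}_i(f;E,[d])$ are the values $f_i(E,\underline{d}_i,\overline{d}_{-i})$ and $f_i(E,\overline{d}_i,\underline{d}_{-i})$ of $f_i$ at the admissible selections $(\underline{d}_i,\overline{d}_{-i})$ and $(\overline{d}_i,\underline{d}_{-i})$, so any proper subinterval must exclude one of these two selection values and therefore cannot enclose every selection.

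Part~(ii) should fall out of the same ingredients. Reasonableness follows by applying $0\le f_i\le(\text{claim of }i)$ at the two endpoint profiles: the claim of $i$ is $\underline{d}_i$ at the left endpoint and $\overline{d}_i$ at the right endpoint, giving $0\le f_i(E,\underline{d}_i,\overline{d}_{-i})\le\underline{d}_i$ and $0\le f_i(E,\overline{d}_i,\underline{d}_{-i})\le\overline{d}_i$, i.e.\ $[0,0]\preceq\mathcal{F}_i(f;E,[d])\preceq[\underline{d}_i,\overline{d}_i]$. For weak efficiency I would avoid computing the endpoint sums directly and instead reuse part~(i): pick any selection $d\in[d]$ (for instance $d=\underline{d}$), so that $f_i(E,d)\in\mathcal{F}_i(f;E,[d])$ for every $i$; summing the membership inequalities $f_i(E,\underline{d}_i,\overline{d}_{-i})\le f_i(E,d)\le f_i(E,\overline{d}_i,\underline{d}_{-i})$ over $i$ and using $\sum_{i\in N}f_i(E,d)=E$ gives $\sum_{i\in N}f_i(E,\underline{d}_i,\overline{d}_{-i})\le E\le\sum_{i\in N}f_i(E,\overline{d}_i,\underline{d}_{-i})$, which is precisely $E\in\sum_{i\in N}\mathcal{F}_i(f;E,[d])$.

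The arguments are short, so the only genuine care is definitional rather than computational. The step I would treat most carefully is tightness, where one must pin down the intended reading of ``tight with respect to selection inclusion.'' Taking it to mean that $\mathcal{F}_i(f;E,[d])$ is the smallest interval enclosing every selection value $f_i(E,d)$, tightness reduces exactly to the attainment of the two endpoints established above; if instead one reads ``proper subset'' literally, one additionally needs that every intermediate value is realised by some selection, which follows from continuity of $f$ (as holds for all the standard rules listed after Assumption~\ref{asmpf}). Everything else is a direct application of monotonicity, the bounds $0\le f_i\le d_i$, and classical efficiency, with \eqref{basicassmp} invoked only to guarantee that each profile at which $f$ is evaluated is a legitimate bankruptcy problem.
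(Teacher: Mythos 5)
Your proof is correct and follows essentially the same route as the paper: the membership and tightness in part~(i) via the monotonicity chain $f_i(E,\underline{d}_i,\overline{d}_{-i}) \leq f_i(E,d) \leq f_i(E,\overline{d}_i,\underline{d}_{-i})$ with attained endpoints, and part~(ii) from reasonableness and efficiency of $f$. Your weak-efficiency step (summing the membership inequalities at one arbitrary selection) is only a cosmetic variant of the paper's, which compares the endpoint sums to $\sum_i f_i(E,\underline{d}) = E$ and $\sum_i f_i(E,\overline{d}) = E$ directly.
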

\begin{proof}
To prove {\em (i)} consider the following chain of inequalities
valid by Assumption \ref{asmpf} for each $i \in N$ and $d \in [d]$:
\[
f_i(E,\underline{d}_i,\overline{d}_{-i}) \leq
f_i(E,d) \leq
f_i(E,\overline{d}_i,\underline{d}_{-i}) \; .
\]
Since the extremes are attained, they define the smallest interval
with this property.

To prove {\em (ii)} simply note that the classical bankruptcy rule
$f$ is reasonable, and therefore, for each $i \in N$,
\[
\underline{d}_i \geq
f_i(E,\underline{d}_i,\overline{d}_{-i}) \geq 0 \quad
; \quad \overline{d}_i \geq
f_i(E,\overline{d}_i,\underline{d}_{-i}) \geq 0,
\]
and efficient, so
\begin{gather*}
\sum_{i \in N} f_i(E,\underline{d}_i,\overline{d}_{-i}) 
\leq \sum_{i \in N} f_i(E,\underline{d}_i,\underline{d}_{-i}) = E  \: ;
\\
\sum_{i \in N}
f_i(E,\overline{d}_i,\underline{d}_{-i}) 
\geq \sum_{i \in N} f_i(E,\overline{d}_i,\overline{d}_{-i}) = E \: .
\end{gather*}
\end{proof}

Next, we focus on truncation properties for interval claims. Any
claim that exceeds the estate $E$ may
be considered excessive. Accordingly, we truncate all claims with
respect to this single value. Denote
\[
[d] \wedge E = \left(
[\underline{d}_i \wedge E,\overline{d}_i \wedge
E] \right)_{i \in N}
\]
\begin{defn}
The {\em truncated} interval bankruptcy rule based on $f$ is given
by $\label{truncf}
\mathcal{F}(f;E,[d] \wedge E)$ .
\end{defn}
The truncated interval rule plays an important
role when the underlying classical division rule $f$ is game
theoretic.

\begin{prop} \label{int_crispgtrule}
Suppose that $f$ is a game theoretic division rule satisfying
Assumption \ref{asmpf}. Then
\begin{enumerate}
\item The interval bankruptcy rule coincides with its truncated form, i.e.
\begin{equation}
\label{bankf_eq_trunc} \mathcal{F}(f;E,[d] \wedge E) =
\mathcal{F}(f;E,[d]) \; ;
\end{equation}
\item For each $i \in N$ and $d^* \in [d]$, we have
\[
f_i(E,d^*) \in \mathcal{F}_i(f;E,[d] \wedge E)
\]
and, for all $i \in N$, the interval $\mathcal{F}_i(f,E,[d] \wedge E)$
is tight with respect to selection inclusion;
\item The truncated interval bankruptcy rule is weakly efficient and reasonable.
\end{enumerate}
\end{prop}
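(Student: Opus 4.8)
The plan is to establish item \emph{(i)} first, since items \emph{(ii)} and \emph{(iii)} will then follow almost immediately by transporting Proposition \ref{prop_tightwe} across the identity (\ref{bankf_eq_trunc}).

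For item \emph{(i)} I would argue componentwise, showing that the two intervals $\mathcal{F}_i(f;E,[d])$ and $\mathcal{F}_i(f;E,[d]\wedge E)$ share both endpoints. Expanding the defining formula (\ref{regf}) for the truncated claims $[d]\wedge E$, the lower endpoint of $\mathcal{F}_i(f;E,[d]\wedge E)$ is $f_i(E,\underline{d}_i\wedge E,\overline{d}_{-i}\wedge E)$ and the upper endpoint is $f_i(E,\overline{d}_i\wedge E,\underline{d}_{-i}\wedge E)$, where truncation acts coordinatewise. The key observation is that the configuration of claims used to build each endpoint, once truncated at $E$, is exactly the coordinatewise truncation of the original endpoint configuration: $(\underline{d}_i,\overline{d}_{-i})\wedge E=(\underline{d}_i\wedge E,\overline{d}_{-i}\wedge E)$, and likewise for the upper endpoint. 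Since $f$ is game theoretic, Theorem \ref{thm5} gives $f(E,d)=f(E,d\wedge E)$ for every classical problem; applying this to the endpoint configurations $(\underline{d}_i,\overline{d}_{-i})$ and $(\overline{d}_i,\underline{d}_{-i})$ shows that the corresponding endpoints of $\mathcal{F}_i(f;E,[d])$ and $\mathcal{F}_i(f;E,[d]\wedge E)$ coincide. As this holds for every $i\in N$, identity (\ref{bankf_eq_trunc}) follows.

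With \emph{(i)} in hand, the remaining items reduce to Proposition \ref{prop_tightwe}. For \emph{(ii)}, Proposition \ref{prop_tightwe}\emph{(i)} already gives $f_i(E,d^*)\in\mathcal{F}_i(f;E,[d])$ together with tightness for every selection $d^*\in[d]$; rewriting the interval via (\ref{bankf_eq_trunc}) immediately yields $f_i(E,d^*)\in\mathcal{F}_i(f;E,[d]\wedge E)$ and preserves tightness. For \emph{(iii)}, Proposition \ref{prop_tightwe}\emph{(ii)} states that $\mathcal{F}(f;E,[d])$ is weakly efficient and reasonable, and (\ref{bankf_eq_trunc}) transfers both properties to the truncated rule; reasonableness with respect to the original claims $[\underline{d}_i,\overline{d}_i]$ then holds a fortiori, since $[\underline{d}_i\wedge E,\overline{d}_i\wedge E]\preceq[\underline{d}_i,\overline{d}_i]$.

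The only point requiring care, and the one I would treat as the main (if minor) obstacle, is verifying that coordinatewise truncation genuinely commutes with the selection of the endpoint configurations in (\ref{regf}), so that Theorem \ref{thm5} can be invoked on exactly the right classical problem. One should also confirm that each endpoint configuration is a legitimate bankruptcy problem, i.e.\ that the relevant sum of (possibly truncated) claims is at least $E$; this follows from the standing assumption (\ref{basicassmp}) together with the elementary fact that $\sum_{i\in N} d_i\ge E$ implies $\sum_{i\in N}(d_i\wedge E)\ge E$.
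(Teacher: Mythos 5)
Your proposal is correct and takes essentially the same route as the paper: item \emph{(i)} is proved exactly as in the paper's proof, by applying Theorem \ref{thm5} to the two endpoint configurations $(\underline{d}_i,\overline{d}_{-i})$ and $(\overline{d}_i,\underline{d}_{-i})$ and observing that truncation acts coordinatewise. The only cosmetic difference is that you obtain \emph{(ii)} and \emph{(iii)} by transporting Proposition \ref{prop_tightwe} across the identity \eqref{bankf_eq_trunc}, whereas the paper re-derives the corresponding inequality chains explicitly; the ingredients invoked (Assumption \ref{asmpf} for the middle inequalities, Theorem \ref{thm5} for the endpoint equalities) are identical, and your added check that truncation preserves \eqref{basicassmp} is a sound point the paper leaves implicit.
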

\begin{proof}
To prove {\em (i)}, note that, since $f$ is game theoretic
\begin{multline*}
\mathcal{F}(f;E,[d]) = \left(
[f_i(E,\underline{d}_i,\overline{d}_{-i}),
f_i(E,\overline{d}_i,\underline{d}_{-i})] \right)_{i
\in N} =
\\
 \left(
[f_i(E,\underline{d}_i \wedge
E,\overline{d}_{-i} \wedge E),
f_i(E,\overline{d}_i \wedge
E,\underline{d}_{-i} \wedge E)] \right)_{i
\in N} =
\\
 \left(
[f_i(E,\underline{d_i \wedge
E},\overline{d_{-i} \wedge E}),
f_i(E,\overline{d_i \wedge
E},\underline{d_{-i} \wedge E})] \right)_{i
\in N} =
\mathcal{F}(f;E,[d] \wedge E) .
\end{multline*}

To show {\em (ii)} consider the following chain of (in)equalities
valid for each $i \in N$ and $d^* \in
[d]$:
\begin{multline}
\label{incl_trunc}  
f_i(E,\underline{d}_i \wedge
E,\overline{d}_{-i} \wedge E) =
f_i(E,\underline{d}_i,\overline{d}_{-i})  \leq
 \\
  f_i(E,d^*) \leq
f_i(E,\overline{d}_i,\underline{d}_{-i}) =
f_i(E,\overline{d}_i \wedge E,
\underline{d}_{-i} \wedge E) \; .
\end{multline}
The inequalities in the middle derive from the claim monotonicity
properties of $f$, while the fact that $f$ is a game theoretic rule
and Theorem \ref{thm5} explain the equality signs at the extremes.
Once again, the extremes are attained.

Regarding {\em (iii)}, we have
\begin{gather*}
\underline{d}_i \geq
f_i(E,\underline{d}_i,\overline{d}_{-i}) =
f_i(E,\underline{d}_i \wedge
E,\overline{d}_{-i} \wedge E)\geq 0 \; , \\
\overline{d}_i \geq
f_i(E,\overline{d}_i,\underline{d}_{-i}) =
f_i(E,\overline{d}_i \wedge E,
\underline{d}_{-i} \wedge E) \geq 0 \; ,
\end{gather*}
while
\begin{gather*}
\sum_{i \in
N} f_i(E,\underline{d}_i \wedge E,
\overline{d}_{-i} \wedge E) = \sum_{i \in N}
f_i(E,\underline{d}_i,\overline{d}_{-i}) \leq
E \; ,
\\
\sum_{i \in N} f_i(E,\overline{d}_i \wedge
E, \underline{d}_{-i} \wedge E ) = \sum_{i
\in N} f_i(E,\overline{d}_i, \underline{d}_{-i}) \geq
E \; .
\end{gather*}
\end{proof}

\section{Interval Bankruptcy Games and Game Theoretic Rules}
We now extend the notion of bankruptcy game to the specific
interval setting by considering the interval bankruptcy game already defined in 
\cite{bdt03}.
\begin{defn}
The {\em interval bankruptcy game} for the estate $E$
and interval claims $[d]$ is defined, for each $S \subset N$, by
\begin{multline}
\label{defibg} w_{E,[d]}(S) =
[v_{E,\overline{d}}(S),v_{E,\underline{d}}(S)]
=
\\
\left[ \left( E - \sum_{i \in N \setminus S}
\overline{d}_i \right)_+ , \left( E - \sum_{i \in N
\setminus S} \underline{d}_i \right)_+\right] \; .
\end{multline}
\end{defn}
For each $S \subset N$, the interval is delimited by what is left
to coalition $S$ in the worst and in the best possible situation,
respectively, after the players outside $S$ have been compensated
with their full claim.

We now show that every classical bankruptcy game
originating from the estate $E$ and claims $[d]$ is a selection
of the interval bankruptcy game, i.e.\ its values fall in the range of the interval bankruptcy game, and that the each interval game determination is tight with respect to its selections.
\begin{prop} \label{propigt}
For each $S \subset N$ and each
$d^* \in [d]$ 
\[ 
v_{E,d^*}(S) \in
w_{E,[d]}(S) \qquad \mbox{for each }S \subset N \; ,
\]
and each interval $w_{E,[d]}(S)$ is tight.
\end{prop}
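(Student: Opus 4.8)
The plan is to mirror the argument already used for Proposition~\ref{prop_tightwe}(i), but now exploiting the explicit closed form of the bankruptcy game rather than Assumption~\ref{asmpf}. The key observation is that $v_{E,d}(S)=\left(E-\sum_{i\in N\setminus S}d_i\right)_+$ depends only on the claims of the players \emph{outside} $S$, and that the truncation map $x\mapsto(x)_+$ is nondecreasing. Consequently, for fixed $E$ and $S$, the value $v_{E,d}(S)$ is a nonincreasing function of each claim $d_i$ with $i\in N\setminus S$: raising such a claim only lowers the argument $E-\sum_{i\in N\setminus S}d_i$, and the truncation preserves order.

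First I would establish the membership. Fix a selection $d^*\in[d]$, so that $\underline{d}_i\le d^*_i\le\overline{d}_i$ for every $i\in N$. Summing over $i\in N\setminus S$ and subtracting from $E$ reverses the inequalities, and applying $(\cdot)_+$ yields the chain
\[
\left(E-\sum_{i\in N\setminus S}\overline{d}_i\right)_+\le\left(E-\sum_{i\in N\setminus S}d^*_i\right)_+\le\left(E-\sum_{i\in N\setminus S}\underline{d}_i\right)_+ ,
\]
which is precisely $v_{E,\overline{d}}(S)\le v_{E,d^*}(S)\le v_{E,\underline{d}}(S)$. By the definition in~\eqref{defibg} this says $v_{E,d^*}(S)\in w_{E,[d]}(S)$, as claimed; it also confirms that the interval in~\eqref{defibg} is correctly oriented, with $v_{E,\overline{d}}(S)$ as its left endpoint.

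For tightness with respect to selections, I would argue that both endpoints of $w_{E,[d]}(S)$ are themselves attained by genuine selections. The corner vectors $\overline{d}$ and $\underline{d}$ both lie in $[d]$, and by construction $v_{E,\overline{d}}(S)$ and $v_{E,\underline{d}}(S)$ coincide with the left and right endpoints of $w_{E,[d]}(S)$, respectively. Since the set of selection values $\{\,v_{E,d^*}(S):d^*\in[d]\,\}$ is contained in $w_{E,[d]}(S)$ by the first part yet contains both of its endpoints, no proper subinterval can contain all selection values, so $w_{E,[d]}(S)$ is the smallest interval enclosing them.

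I do not anticipate a serious obstacle here, as the reasoning is structurally identical to Proposition~\ref{prop_tightwe}(i). The only point demanding care is the order-reversal introduced by subtracting the claims from $E$ together with the flattening effect of $(\cdot)_+$: one must track these to be sure that $v_{E,\overline{d}}(S)$, and not $v_{E,\underline{d}}(S)$, is the left endpoint, and that the truncation never destroys the inclusion by collapsing an interior selection value outside the stated range.
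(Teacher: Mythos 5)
Your proof is correct and takes essentially the same route as the paper: the same chain of inequalities $E-\sum_{i\in N\setminus S}\overline{d}_i\le E-\sum_{i\in N\setminus S}d^*_i\le E-\sum_{i\in N\setminus S}\underline{d}_i$, preserved under the nondecreasing map $(\cdot)_+$. The only difference is that you make the tightness argument explicit (the endpoints are attained at the corner selections $\overline{d}$ and $\underline{d}$), whereas the paper leaves this implicit, relying on the analogous remark in Proposition~\ref{prop_tightwe}.
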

\begin{proof}
Simply note that, for each $S \subset N$
and each $d^* \in [d]$,
\[
E - \sum_{i \in N \setminus S} \overline{d}_i \leq
E - \sum_{i \in N \setminus S} d^*_i \leq
E - \sum_{i \in N \setminus S} \underline{d}_i ,
\]
and the chain of inequalities remains valid if we apply the $(\cdot)_+$ operator. Therefore
\[
\underline{w}_{E,[d]}(S) \leq v_{E,d}(S) \leq
\overline{w}_{E,[d]}(S) .
\]
\end{proof}

Just as in the previous section, we consider a truncated form.
\begin{defn}
The {\em truncated interval bankruptcy game}
 is defined, for each $S \subset N$, as $
w_{E,[d] \wedge E}(S).$
\end{defn}
As in the previous case, the two games coincide.
\begin{prop} \label{uegames_coincide}
Every interval bankruptcy game coincides with its truncated form, i.e. 
$$
w_{E,[d] \wedge E}(S)=
w_{E,[d]}(S)\qquad  \mbox{ for all } \quad S \subset N.
$$
\end{prop}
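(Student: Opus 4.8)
The plan is to reduce the interval statement to the analogous fact for the classical bankruptcy game, applied separately to the two endpoints. First I would record that, by the componentwise definition of the interval truncation, the profile $[d] \wedge E$ has lower bounds $\underline{d}_i \wedge E$ and upper bounds $\overline{d}_i \wedge E$. Substituting these into the defining formula \eqref{defibg} gives
\[
w_{E,[d] \wedge E}(S) = \left[\, v_{E,\, \overline{d} \wedge E}(S),\ v_{E,\, \underline{d} \wedge E}(S) \,\right],
\]
where $\overline{d} \wedge E = (\overline{d}_i \wedge E)_{i \in N}$ and $\underline{d} \wedge E = (\underline{d}_i \wedge E)_{i \in N}$. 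Hence it suffices to establish the crisp truncation invariance $v_{E,\, c \wedge E}(S) = v_{E,\, c}(S)$ for an arbitrary claims vector $c \in \R_+^n$, and then apply it once with $c = \overline{d}$ and once with $c = \underline{d}$.

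To prove the crisp identity, fix $S \subset N$, set $T = N \setminus S$, and compare $A = \sum_{i \in T} c_i$ with $B = \sum_{i \in T}(c_i \wedge E)$; note $B \leq A$ since $c_i \wedge E \leq c_i$. The cleanest route is the elementary identity $(E - x)_+ = E - (x \wedge E)$ for $x \geq 0$, which turns the goal into showing $A \wedge E = B \wedge E$. If every $c_i$ with $i \in T$ satisfies $c_i \leq E$, then $B = A$ and there is nothing to prove. Otherwise some $c_j > E$ with $j \in T$ forces $A \geq c_j > E$ and $B \geq c_j \wedge E = E$, so both $A \wedge E$ and $B \wedge E$ equal $E$. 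In all cases $A \wedge E = B \wedge E$, hence $v_{E,\, c \wedge E}(S) = v_{E,\, c}(S)$.

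Applying this to $c = \overline{d}$ and $c = \underline{d}$ gives $v_{E,\, \overline{d} \wedge E}(S) = v_{E,\, \overline{d}}(S)$ and $v_{E,\, \underline{d} \wedge E}(S) = v_{E,\, \underline{d}}(S)$, so that
\[
w_{E,[d] \wedge E}(S) = \left[\, v_{E,\overline{d}}(S),\ v_{E,\underline{d}}(S) \,\right] = w_{E,[d]}(S) \qquad \text{for every } S \subset N,
\]
which is the desired conclusion. I expect the only subtle point to be the case analysis in the crisp step: one must observe that a single claim in $T$ exceeding the estate already drives both summed-and-truncated quantities up to the level $E$, so that truncation can never move the game value away from zero. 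Everything else is a direct substitution into the game definition, so no further obstacle is anticipated.
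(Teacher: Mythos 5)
Your proof is correct and follows essentially the same route as the paper's: both reduce the claim to the crisp truncation invariance $\bigl(E - \sum_{i \in N \setminus S}(d^*_i \wedge E)\bigr)_+ = \bigl(E - \sum_{i \in N \setminus S} d^*_i\bigr)_+$ and settle it by the same two-case analysis on whether some claim in $N \setminus S$ reaches the estate. Your use of the identity $(E-x)_+ = E - (x \wedge E)$ is a minor stylistic streamlining of the paper's direct case check, and your explicit application at the two endpoint vectors $\overline{d}$ and $\underline{d}$ merely spells out what the paper leaves implicit.
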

\begin{proof}
It follows from the following equation, valid for any $d^* \in [d]$ and any $S \subset N$
\begin{equation}
\label{eq_tue}
\left( E - \sum_{i \in N
\setminus S} (d^*_i \wedge E)  \right)_+ =
\left( E - \sum_{i \in N \setminus S} d^*_i
\right)_+ \: .
\end{equation}
To prove it, we distinguish two cases: ($i$) if $d^*_i < E$ for any $i \in N \setminus S$ then $d^*_i \wedge E = d^*_i$ for any $i$. ($ii$) If $d^*_j \geq E$ for some $j \in N \setminus S$ then both quantities in \eqref{eq_tue} reduce to 0.
\end{proof}
We now consider those division rules which can be related with
solutions of the interval bankruptcy game defined by \eqref{defibg}.
\begin{defn}
An interval division rule $\mathcal{F}$ is an {\em interval
game theoretic rule} if there exists an interval solution concept
$\mathcal{G}$ for interval cooperative games, i.e.\ a mapping from $I(\R_+)^{2^n - 1}$ to $I(\R_+)^n$, such that
\[
\mathcal{F}(E,[d]) = \mathcal{G}(w_{E,[d]}) .
\]
\end{defn}
We are now able to state a full extension of Theorem \ref{thm5}.

\begin{thm}
\label{int-thm5} An interval division rule $\mathcal{F}$ based on the classical
bankruptcy rule $f$ is game theoretic if and only if the rule
coincides with its truncated form, i.e. \eqref{bankf_eq_trunc}
holds.
\end{thm}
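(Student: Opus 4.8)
The plan is to mirror the classical argument behind Theorem~\ref{thm5}, the only genuinely new ingredient being a recovery of the (truncated) interval data directly from the interval bankruptcy game by means of the interval arithmetic of Section~3. Throughout I treat $f$ as satisfying Assumption~\ref{asmpf}, so that $\mathcal{F}(f;\cdot,\cdot)$ is a genuine interval division rule.

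For the ``only if'' direction, suppose $\mathcal{F}$ is game theoretic, so that $\mathcal{F}(E,[d]) = \mathcal{G}(w_{E,[d]})$ for some interval solution concept $\mathcal{G}$. I would apply this identity both to $[d]$ and to its truncation $[d] \wedge E$, and then invoke Proposition~\ref{uegames_coincide}, which asserts $w_{E,[d] \wedge E} = w_{E,[d]}$. This yields at once
\[
\mathcal{F}(f;E,[d] \wedge E) = \mathcal{G}(w_{E,[d] \wedge E}) = \mathcal{G}(w_{E,[d]}) = \mathcal{F}(f;E,[d]),
\]
which is precisely \eqref{bankf_eq_trunc}. This direction is immediate and uses nothing beyond Proposition~\ref{uegames_coincide}.

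The substance lies in the ``if'' direction. Assuming \eqref{bankf_eq_trunc}, I must exhibit an interval solution concept $\mathcal{G}$ with $\mathcal{F}(f;E,[d]) = \mathcal{G}(w_{E,[d]})$. The idea, exactly as in the classical proof where one reads $E=v_{E,d}(N)$ and $d_i\wedge E = E - v_{E,d}(N\setminus\{i\})$ off the game, is to reconstruct the estate and the truncated claims from the interval game. Since $v_{E,d}(N)=E$, formula \eqref{defibg} gives $w_{E,[d]}(N) = [E,E]$, so $E = \overline{w}_{E,[d]}(N)$. Likewise $w_{E,[d]}(N \setminus \{i\}) = [(E - \overline{d}_i)_+, (E - \underline{d}_i)_+]$, and a short computation with the interval difference gives
\[
[E,E] - w_{E,[d]}(N \setminus \{i\}) = [\underline{d}_i \wedge E, \overline{d}_i \wedge E],
\]
so the whole truncated vector $[d] \wedge E$ is recovered from the game as well. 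I then define $\mathcal{G}$ on interval bankruptcy games by $\mathcal{G}(w) = \mathcal{F}(f; E, [d] \wedge E)$, extracting $E$ and $[d] \wedge E$ from $w$ as above, and extend it arbitrarily to interval games not arising from a bankruptcy problem.

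The point demanding care --- and the main obstacle --- is ensuring that the interval game carries \emph{enough} information to pin down the truncated problem, so that $\mathcal{G}$ really is a function of $w$ alone: the extraction $w \mapsto (E,[d]\wedge E)$ must be single-valued, i.e.\ two interval bankruptcy problems inducing the same game must share the same estate and the same truncated claims. This follows because the recovery identities above depend only on the values $w(N)$ and $w(N \setminus \{i\})$, $i \in N$. Granting this, plugging $w = w_{E,[d]}$ into the definition returns exactly $(E,[d]\wedge E)$, whence the truncation hypothesis \eqref{bankf_eq_trunc} gives $\mathcal{G}(w_{E,[d]}) = \mathcal{F}(f;E,[d] \wedge E) = \mathcal{F}(f;E,[d])$, establishing that $\mathcal{F}$ is game theoretic and completing the proof.
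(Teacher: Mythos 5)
Your proposal is correct and takes essentially the same route as the paper: the ``only if'' direction is the paper's verbatim appeal to Proposition~\ref{uegames_coincide}, and your interval-arithmetic recovery $[E,E] - w(N\setminus\{i\}) = [\underline{d}_i \wedge E, \overline{d}_i \wedge E]$ is exactly the paper's marginal quantities $\underline{M}_i^w = \underline{w}(N) - \overline{w}(N\setminus\{i\})$ and $\overline{M}_i^w = \overline{w}(N) - \underline{w}(N\setminus\{i\})$, to which $f$ is then applied with estate $w(N)$. The only cosmetic difference is how the solution concept is made total on all interval games: the paper gives a single closed formula $\mathcal{G}^f$, adding the correction term $\bigl(\overline{w}(N) - \sum_{i \in N}\overline{M}_i^w\bigr)_+/n$ (which vanishes on bankruptcy games by \eqref{basicassmp}) so that the extracted data always form a problem in the domain of $f$, whereas you define $\mathcal{G}$ only on games arising from bankruptcy problems --- with an explicit, and correct, well-definedness check --- and extend it arbitrarily elsewhere; both devices are legitimate.
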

\begin{proof}
Suppose that $\mathcal{F}(f; \cdot, \cdot)$ is game theoretic.
Then, by Proposition \ref{uegames_coincide}, for any estate $E$ and claims $[d]$, 
$$
\mathcal{F}(f;E,[d]) = \mathcal{G}(w_{E,[d]}) =
\mathcal{G}(w_{E,[d] \wedge E}) =
\mathcal{F}(f;E,[d] \wedge E)
$$
and, therefore, \eqref{bankf_eq_trunc} holds.

Conversely, let $f$ be the classical bankruptcy rule on which the rule $\mathcal{F}$ is based, and suppose that \eqref{bankf_eq_trunc} holds. Consider the following interval solution concept $\mathcal{G}^f$
defined for any interval game $w$ 
\[
\mathcal{G}^f (w) = \left( [ f_i(\underline{w}(N),\underline{K}_i^w, \overline{K}_{-i}^w),
f_i(\overline{w}(N),\overline{K}_i^w, \underline{K}_{-i}^w)] \right)_{i \in N} \; ,
\]
where, for each $i \in N$,
\begin{gather*}
\overline{K}_i^w = \overline{M}_i^w + \frac{\left( \overline{w}(N) 
- \sum_{i \in N} \overline{M}_i^w \right)_+}{n}
\\
\underline{K}_i^w = \underline{M}_i^w + \frac{\left( \underline{w}(N) 
- \sum_{i \in N} \underline{M}_i^w \right)_+}{n}
\end{gather*}
and, for each $i \in N$, 
\begin{gather*}
\overline{M}_i^w = \overline{w}(N) - \underline{w}(N \setminus \{ i \}) 
\\
\underline{M}_i^w = \underline{w}(N) - \overline{w}(N \setminus \{ i \}) 
\end{gather*}
Now we apply the interval rule $\mathcal{G}^f$ to the interval bankruptcy game $w_{[E],d}$ defined in \eqref{defibg}. Clearly, $\underline{w}_{[E],d}(N)=\overline{w}_{[E],d}(N)=E$ and, for each $i \in N$
\begin{gather*}
\overline{K}_i^w = \overline{M}_i^w = E - (E - \overline{d}_i)_+ = \overline{d}_i \wedge E \: ,
\\
\underline{K}_i^w = \underline{M}_i^w = E - (E - \underline{d}_i)_+ = \underline{d}_i \wedge E \: ,
\end{gather*}
because
$$
\left( \overline{w}_{[E],d}(N) 
- \sum_{i \in N} \overline{M}_i^w \right)_+ = 
\left( \underline{w}_{[E],d}(N) 
- \sum_{i \in N} \underline{M}_i^w \right)_+= 0 \: .
$$
Thus
$$
\mathcal{G}^f(w_{E,[d]}) = \mathcal{F}(f;E, [d] \wedge E) = \mathcal{F}(f;E, [d]) \: ,
$$
the last equality following from \eqref{bankf_eq_trunc}.
\end{proof}
The following result becomes a direct consequence of the previous Theorem.
\begin{thm}
\label{thm_crisp_int}
Let $f$ be a classical game theoretic bankruptcy rule. Then, the interval bankruptcy rule $\mathcal{F}(f;\cdot,\cdot)$ based on $f$ is also game theoretic.
\end{thm}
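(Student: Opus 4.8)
The plan is to deduce this immediately from the characterization just proved in Theorem~\ref{int-thm5}. That theorem states that the interval rule $\mathcal{F}(f;\cdot,\cdot)$ is interval game theoretic precisely when it is truncation-invariant, i.e.\ when \eqref{bankf_eq_trunc} holds. Consequently the entire task reduces to a single verification: I must show that whenever $f$ is a classical game theoretic bankruptcy rule, the associated interval rule coincides with its truncated form. No new construction of an interval solution concept is needed, since Theorem~\ref{int-thm5} already supplies the ``only if'' direction as a black box.

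To establish \eqref{bankf_eq_trunc}, I would invoke Theorem~\ref{thm5}: because $f$ is game theoretic, $f(E,d)=f(E,d\wedge E)$ for every classical problem $(E,d)$. The endpoints of $\mathcal{F}_i(f;E,[d])$ in \eqref{regf} are the values of $f_i$ at the two particular selections $(\underline{d}_i,\overline{d}_{-i})$ and $(\overline{d}_i,\underline{d}_{-i})$. Applying the truncation identity to each of these selections separately leaves each endpoint unchanged, and since the operation $\cdot\wedge E$ acts coordinatewise, truncating $[d]$ first and then forming the interval rule yields exactly the same two endpoints for every $i\in N$. This is precisely the chain of equalities already carried out in the proof of Proposition~\ref{int_crispgtrule}(i), so I can either cite that result directly or reproduce the endpoint-by-endpoint computation in one line. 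Either way, $\mathcal{F}(f;E,[d])=\mathcal{F}(f;E,[d]\wedge E)$, which is \eqref{bankf_eq_trunc}.

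Combining the two observations finishes the argument: \eqref{bankf_eq_trunc} holds, hence by the ``if'' direction of Theorem~\ref{int-thm5} the rule $\mathcal{F}(f;\cdot,\cdot)$ is interval game theoretic. I do not anticipate a genuine obstacle, as the substantive content already resides in Theorem~\ref{thm5}, Proposition~\ref{int_crispgtrule} and Theorem~\ref{int-thm5}; the only point deserving care is to check that truncation really commutes with the passage to interval endpoints, which is guaranteed by applying Theorem~\ref{thm5} to the lower and upper selections independently rather than to $[d]$ as a whole.
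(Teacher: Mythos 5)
Your argument is correct and is essentially identical to the paper's own proof: the paper likewise obtains \eqref{bankf_eq_trunc} from Proposition \ref{int_crispgtrule} (whose part \emph{(i)} is exactly the endpoint-by-endpoint application of Theorem \ref{thm5} you describe) and then concludes via the ``if'' direction of Theorem \ref{int-thm5}. Your remark that truncation commutes with passing to the interval endpoints, because $\cdot \wedge E$ acts coordinatewise on the selections $(\underline{d}_i,\overline{d}_{-i})$ and $(\overline{d}_i,\underline{d}_{-i})$, is precisely the point the paper's chain of equalities makes.
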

\begin{proof}
If $f$ is game theoretic, then \eqref{bankf_eq_trunc} holds by
Proposition \ref{int_crispgtrule}. Apply Theorem \ref{int-thm5}.
\end{proof}

\section{The trouble with the interval estate}
Just as for the claims, the effective amount of the estate may be a source of uncertainty. It is therefore natural to extend the model of bankruptcy interval rule and bankruptcy interval game to the case where both the estate and the claims are of the interval type, i.e.\ $[E] \in I(\R)$ and $[d] \in I(\R)^N$, respectively.

\begin{defn} 
Let $f$ be a classical bankruptcy rule. The interval bankruptcy rule based on $f$ for the interval estate $[E]$ and the interval claims $[d]$ is
as
\[ \mathcal{F}(f;[E],[d]) =
(\mathcal{F}_i(f;[E],[d]))_{i \in N}
\]
where
\begin{equation}
\label{regf} \mathcal{F}_i(f;[E],[d]) =
[f_i(\underline{E},\underline{d}_i,\overline{d}_{-i}),
f_i(\overline{E},\overline{d}_i,\underline{d}_{-i})] \qquad
\mbox{for every }i \in N \; .
\end{equation}
\end{defn}
We have
\begin{prop}
Suppose $f$ satisfies Assumption \ref{asmpf}. Then
\begin{enumerate}
\item for each $i \in N$,  $E^* \in [E]$ and $d^* \in [d]$, we have
\[
f_i(E^*,d^*) \in \mathcal{F}_i(f;[E],[d])
\]
and, for all $i \in N$, the interval $\mathcal{F}_i(f;[E],[d])$ is
tight.
\item $\mathcal{F}(f,\cdot,\cdot)$ is weakly efficient and reasonable.
\end{enumerate}
\end{prop}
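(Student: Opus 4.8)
The plan is to mirror the proof of Proposition \ref{prop_tightwe}, since the present statement is the natural analogue in which the crisp estate $E$ is replaced by an interval estate $[E]$. The only new feature is that the rule now depends on a second varying quantity, $E^* \in [E]$, in addition to $d^* \in [d]$; but Assumption \ref{asmpf} already supplies monotonicity in $E$, so no essentially new ingredient is needed. The key observation to set up at the start is that, for fixed $i \in N$, the map $(E,d_i,d_{-i}) \mapsto f_i(E,d_i,d_{-i})$ is nondecreasing in both $E$ and $d_i$ and nonincreasing in each component of $d_{-i}$. Hence the value $f_i(E^*,d^*)$ is simultaneously maximized by pushing $E^*$ and $d^*_i$ up and each $d^*_j$ ($j \neq i$) down, and minimized by the opposite choice.

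To prove \emph{(i)} I would exhibit the single chain of inequalities that bounds $f_i(E^*,d^*)$ between the two endpoints of $\mathcal{F}_i(f;[E],[d])$. Concretely, for every $E^* \in [E]$ and $d^* \in [d]$,
\[
f_i(\underline{E},\underline{d}_i,\overline{d}_{-i}) \leq
f_i(E^*,d^*) \leq
f_i(\overline{E},\overline{d}_i,\underline{d}_{-i}),
\]
where the left inequality follows by lowering $E^*$ to $\underline{E}$ and $d^*_i$ to $\underline{d}_i$ while raising each $d^*_j$ to $\overline{d}_j$, and symmetrically for the right inequality. Since both endpoints are themselves attained at admissible choices of $(E^*,d^*)$, namely $(\underline{E},\underline{d}_i,\overline{d}_{-i})$ and $(\overline{E},\overline{d}_i,\underline{d}_{-i})$, they constitute the infimum and supremum of the selection values, so $\mathcal{F}_i(f;[E],[d])$ is the smallest interval containing all selections; this is exactly tightness with respect to selection inclusion.

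For \emph{(ii)} I would again invoke the reasonableness and efficiency of the classical rule $f$ at each fixed profile. Reasonableness of $\mathcal{F}$ is immediate from the componentwise bounds $\underline{d}_i \geq f_i(\underline{E},\underline{d}_i,\overline{d}_{-i}) \geq 0$ and $\overline{d}_i \geq f_i(\overline{E},\overline{d}_i,\underline{d}_{-i}) \geq 0$, which hold because $f$ is reasonable at each crisp profile. For weak efficiency I would bound the sum of lower endpoints above by $\underline{E}$ and the sum of upper endpoints below by $\overline{E}$, using monotonicity to compare with the profiles where all claims agree: summing over $i$,
\[
\sum_{i \in N} f_i(\underline{E},\underline{d}_i,\overline{d}_{-i}) \leq
\sum_{i \in N} f_i(\underline{E},\underline{d}_i,\underline{d}_{-i}) = \underline{E}
\]
by other-regarding monotonicity and the efficiency of $f$, and dually $\sum_{i \in N} f_i(\overline{E},\overline{d}_i,\underline{d}_{-i}) \geq \overline{E}$. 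These two bounds place $[\underline{E},\overline{E}]$ inside $\sum_{i\in N}\mathcal{F}_i(f;[E],[d])$, so in particular the interval estate $[E]$ is contained in the aggregate award, which is the appropriate weak-efficiency statement here.

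The only point requiring care, and the one I expect to be the mild obstacle, is the efficiency step: in the crisp setting the identity $\sum_i f_i(E,d)=E$ is applied at a single estate value, but now the lower and upper sum bounds must be anchored at the \emph{different} estate values $\underline{E}$ and $\overline{E}$ respectively. One must therefore be deliberate about evaluating each sum at the matching estate endpoint before appealing to efficiency, rather than at a common $E$; once that bookkeeping is respected, the argument closes exactly as in Proposition \ref{prop_tightwe}.
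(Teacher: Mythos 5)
Your proof is correct and is precisely the argument the paper intends: the paper omits the proof, stating it is ``very similar to that of Proposition \ref{prop_tightwe}'', and your adaptation---bounding $f_i(E^*,d^*)$ via the monotonicity in $E$ from Assumption \ref{asmpf} together with the claim monotonicities, with both endpoints attained---is exactly that adaptation. Your closing remark about anchoring the two efficiency sums at the different estate endpoints $\underline{E}$ and $\overline{E}$ correctly identifies the only place where the crisp-estate proof needs modification, and you handle it properly.
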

The proof is very similar to that of Proposition \ref{prop_tightwe} and is therefore omitted.
We turn our attention to a suitable interval game for this situation.
\begin{defn}
The {\em interval bankruptcy game} for the interval estate $[E]$
and interval claims $[d]$ is defined, for each $S \subset N$, by
\begin{multline}
\label{defibest} w_{[E],[d]}(S) =
[v_{\underline{E},\overline{d}}(S),v_{\overline{E},\underline{d}}(S)]
=
\\
\left[ \left( \underline{E} - \sum_{i \in N \setminus S}
\overline{d}_i \right)_+ , \left( \overline{E} - \sum_{i \in N
\setminus S} \underline{d}_i \right)_+\right] \; .
\end{multline}
\end{defn}
An analogue of Proposition \ref{propigt} holds (where, again, the proof is omitted).
\begin{prop} \label{propigt_intest}
For each $S \subset N$, each $E^* \in [E]$ and each
$d^* \in [d]$ 
\[ 
v_{E^*,d^*}(S) \in
w_{[E],[d]}(S) \qquad \mbox{for each }S \subset N \; ,
\]
and each interval $w_{[E],[d]}(S)$ is tight.
\end{prop}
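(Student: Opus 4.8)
The plan is to follow the proof of Proposition \ref{propigt} essentially verbatim, now allowing the estate to vary within $[E]$ in addition to the claims. First I would fix $S \subset N$, $E^* \in [E]$ and $d^* \in [d]$, so that $\underline{E} \leq E^* \leq \overline{E}$ and $\underline{d}_i \leq d^*_i \leq \overline{d}_i$ for every $i \in N$. The key observation is that the quantity $E - \sum_{i \in N \setminus S} d_i$, whose positive part defines $v_{E,d}(S)$, is nondecreasing in the estate and nonincreasing in each outside claim. Hence its smallest value is attained in the worst scenario for $S$ (a minimal estate together with maximal outside claims) and its largest value in the best scenario (a maximal estate with minimal outside claims):
\[
\underline{E} - \sum_{i \in N \setminus S} \overline{d}_i \leq E^* - \sum_{i \in N \setminus S} d^*_i \leq \overline{E} - \sum_{i \in N \setminus S} \underline{d}_i .
\]

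Since the map $x \mapsto (x)_+$ is nondecreasing, the chain survives the application of the $(\cdot)_+$ operator, yielding
\[
\underline{w}_{[E],[d]}(S) = v_{\underline{E},\overline{d}}(S) \leq v_{E^*,d^*}(S) \leq v_{\overline{E},\underline{d}}(S) = \overline{w}_{[E],[d]}(S) ,
\]
which is precisely the asserted membership $v_{E^*,d^*}(S) \in w_{[E],[d]}(S)$.

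For tightness I would exhibit selections attaining the two endpoints. Choosing $E^* = \underline{E}$ together with $d^* = \overline{d}$ reproduces exactly the left endpoint, while $E^* = \overline{E}$ together with $d^* = \underline{d}$ reproduces the right endpoint; both are admissible selections. Consequently no proper subinterval of $w_{[E],[d]}(S)$ can contain every value $v_{E^*,d^*}(S)$, so the interval is tight with respect to selection inclusion.

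I do not expect any genuine obstacle here: the argument is a direct adaptation of Proposition \ref{propigt}, the only extra care being to couple the estate with the outside claims in the correct monotone directions (a small estate paired with large outside claims for the lower bound, and a large estate with small outside claims for the upper bound), and to recall that truncation by $(\cdot)_+$ preserves weak inequalities.
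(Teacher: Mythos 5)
Your proof is correct and follows exactly the route the paper intends: the paper omits the proof of Proposition \ref{propigt_intest}, stating only that it is analogous to Proposition \ref{propigt}, and your argument is precisely that adaptation, with the monotone coupling of the estate and outside claims handled correctly and tightness established by noting both endpoints are attained.
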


Having extended the notion of bankruptcy game to the interval setting for both the estate and the claims, we may hope that an
analogue of Theorem \ref{thm_crisp_int} would hold. The
following counterexample, however, highlights a situation where
two instances of interval data pertaining to the same game theoretic classical bankruptcy rule generate the same interval game with two distinct interval rules. The interval bankruptcy rule cannot be derived as an interval solution concept of the interval bankruptcy game.

\begin{xmpl} \label{ctrxmpl1}
Compare the following two situations with two claimants.
\begin{description}
\item[Situation a] $[\underline{E},\overline{E}]_a = [6,8]$,
$[\underline{d}_1,\overline{d}_1]_a=[6,7]$ and
$[\underline{d}_2,\overline{d}_2]_a=[2,3]$
\item[Situation b] $[\underline{E},\overline{E}]_b = [6,8]$,
$[\underline{d}_1,\overline{d}_1]_b=[6,7.5]$ and
$[\underline{d}_2,\overline{d}_2]_b=[2,3]$
\end{description}

If we consider $f$ to be the game theoretic Talmudic (TAL) rule\footnote{The fact that this is a game theoretic rule is shown in Aumann and Maschler \cite{am85}.}, it is easy to verify that 
\begin{multline}
\label{sitsab_differ} \mathcal{F}(TAL;[E]_a,[d]_a) =
([4.5,6.5],[1,2.5]) \neq \\ ([4.5,6.75],[1,2.5]) =
\mathcal{F}(TAL;[E]_b,[d]_b) .
\end{multline}
On the other hand, the two interval games $w_{[E]_a,[d]_a}$ and
$w_{[E]_b,[d]_b}$ coincide, since
\begin{gather*}
w_{[E]_a,[d]_a}(\{1\}) = w_{[E]_b,[d]_b}(\{1\}) = [3,6] \; ;
\\
w_{[E]_a,[d]_a}(\{2\}) = w_{[E]_b,[d]_b}(\{2\}) = [0,2] \; ;
\\
w_{[E]_a,[d]_a}(\{1,2\}) = w_{[E]_b,[d]_b}(\{1,2\}) = [6,8] \; .
\end{gather*}
In conclusion we cannot relate $\mathcal{F}(TAL;\cdot,\cdot)$ with
a game theoretic solution ot the interval bankruptcy game.

\end{xmpl}

The issue to provide an analog  of Theorem \ref{thm5} to this setting  remains on open question.


\begin{thebibliography}{99}




\bibitem{am85} Aumann, R.J.\ and Maschler M., Game theoretic
analysis of a bankruptcy problem from the Talmud, Journal of
Economic Theory 36 (1985) 195--213.



\bibitem{bdt08} Branzei R., Dall'Aglio M.\ and Tijs S., Interval game theoretic division
rules, preprint, Tilburg University, Center for Economic Research,
The Netherlands, CentER DP 97 (2008)



\bibitem{bdt03} Branzei R., Dimitrov D.\ and Tijs S., Shapley-like values
for interval bankruptcy games, Economics Bulletin 3 (2003) 1--8.

\bibitem{bfft02} Branzei R., Ferrari G., Fragnelli V.\ and Tijs S., Two approaches to the
problem of sharing delay costs in joint projects, Annals of Operations
Research 109 (2002) 357-372.

\bibitem{cmt87} Curiel, I.J., M.Maschler and S.H.Tijs, Bankruptcy
games, Zeitschrift f\"ur Operations Research 31 (1987) A 143 --
A 159.

\bibitem{m95} Moore R., Methods and applications of interval analysis, SIAM Studies in Applied Mathematics (1995).

\bibitem{m00} Moulin H., Priority rules and other asymmetric rationing models,
Econometrica 68 (2000) 643-684.

\bibitem{m02} Moulin H., Axiomatic cost and surplus sharing, Chapter 6 of K.\ Arrow, A.\ Sen and K.\ Suzumura (eds.), The Handbook of Social Choice and Welfare (2002), Vol.\ 1, North Holland 

\bibitem{o82} O'Neill B., A problem of right arbitration from the
Talmud, Mathematical Social Sciences 2 (1982) 345--371.

\bibitem{pbhls08} Pulido M., Borm P., Henndrickx R., Llorca N.\ and S\`anchez-Soriano J.,
Compromise solutions for bankruptcy situations with references, Annals
of Operations Research 158 (2008) 133-141.

\bibitem{psl02} Pulido M., S\`anchez-Soriano J.\ and Llorca N., Game theoretic techniques
for university management: an extended bankruptcy model, Annals of
Operations Research 109 (2002) 129-142.

\bibitem{t03} Thomson W., Axiomatic and game-theoretic analysis of bankruptcy problems: a survey, Mathematical Social Sciences 45 (2003) 249--297.


\bibitem{tb04} Tijs S. and Branzei R., Cost sharing in joint projects. In: Carraro,
C. and Fragnelli, V. (Eds.), Game Practice and Environment. Edward
Edgar Publishing (2004) pp. 113-124.

\end{thebibliography}
\end{document}